\title{\LARGE \bf
Employing Altruistic Vehicles at On-ramps to Improve the Social Traffic Conditions}
\newtheorem{theorem}{Theorem}
\theoremstyle{definition}
\newtheorem{example}{Example}
\newtheorem{definition}{Definition}
\theoremstyle{remark}
\definecolor{darkblue}{RGB}{0,101,204}
\definecolor{carorange}{RGB}{255,131,0}
\newcounter{tmp}
\author{Ruolin Li$^{1}$,   Philip N. Brown$^{2}$ and Roberto Horowitz$^{1}$
\thanks{$^{1}${R. Li and R. Horowitz are with the Department of Mechanical Engineering, University of California, Berkeley, CA, USA.
	{\tt\small ruolin\_li@berkeley.edu},
{\tt\small horowitz@me.berkeley.edu.}}
}
\thanks{$^{2}$P. N. Brown is with the Department of Computer Science, University of Colorado Colorado Springs, USA.
       {\tt\small philip.brown@uccs.edu.}}
}
\begin{document}

\maketitle


\thispagestyle{empty}
\pagestyle{empty}

\begin{abstract}
Highway on-ramps are regarded as typical bottlenecks in transportation networks. In previous work, mainline vehicles' selfish lane choice behavior at on-ramps is studied and regarded as one cause leading to on-ramp inefficiency. When on-ramp vehicles plan to merge into the mainline of the highway, mainline vehicles choose to either stay steadfast on the current lane or bypass the merging area by switching to a neighboring lane farther from the on-ramp. Selfish vehicles make the decisions to minimize their own travel delay, which compromises the efficiency of the whole on-ramp. Results in previous work have shown that, if we can encourage a proper portion of mainline vehicles to bypass rather than to stay steadfast, the social traffic conditions can be improved. In this work, we consider employing a proportion of altruistic vehicles among the selfish mainline vehicles to improve the efficiency of the on-ramps. The altruistic vehicles are individual optimizers, making decisions whether to stay steadfast or bypass to minimize their own altruistic cost, which is a weighted sum of the travel delay and their negative impact on other vehicles. We first consider the ideal case that altruistic costs can be perfectly measured by altruistic vehicles. We give the conditions for the proportion of altruistic vehicles and the weight configuration of the altruistic costs, under which the social delay can be decreased or reach the optimal. Subsequently, we consider the impact of uncertainty in the measurement of altruistic costs and we give the optimal weight configuration for altruistic vehicles which minimizes the worst case social delay under such uncertainty.

\end{abstract}

\section{Introduction}\label{intro}
Vehicles' selfish choice behavior on the transportation networks is regarded as one important cause of traffic inefficiency. In~\cite{pigou2013economics,roughgarden2002bad}, it is shown that the social delay may deteriorate under vehicles' selfish routing behavior. In~\cite{negar2018bypassing,li2019extended,li2020game}, vehicles' selfish lane--changing behavior at various traffic diverges and merges are studied and are shown to compromise the social traffic conditions. Highway on--ramps are one of the typical bottlenecks which are severely affected by vehicles' selfish lane--changing behaviors. Vehicles traveling through the on--ramp lane plan to merge into the highway, which stimulates various lane--changing behavior of the vehicles in the on--ramp area. The lane--changing behaviors' negative impact on the traffic conditions at on--ramps are studied in~\cite{cassidy1990proposed,daganzo2002merge,coifman2005lane}.

Researchers have resorted to various methods to constrain or guide vehicles' selfish behaviors to improve the social traffic conditions. In~\cite{cole2003pricing,fleischer2004tolls,mehr2019pricing,ferguson2019utilizing,brown2020can}, the impact of the toll pricing on vehicles is studied in vehicles' selfish routing scenario. In~\cite{beckmann1955studies,sandholm2002evolutionary}, a specific kind of toll called the marginal cost is proved to be capable of optimizing the social delay when vehicles are selfish routing. Recently, autonomous vehicles, which are more controllable than human--driven vehicles, are also increasingly studied on transportation networks. In~\cite{lioris2017platoons,mehr2019will,smith2020improving,li2020impact}, autonomous vehicles are shown to be capable of improving the social traffic conditions by preserving a shorter headway than human--driven vehicles. In~\cite{biyik2018altruistic}, autonomous vehicles are regarded as altruistic vehicles, which are willing to take routes with longer delay than the quickest route. 

In previous work~\cite{li2020game}, we considered a highway on--ramp with two lanes in the mainline as shown in Figure~\ref{fig:diverge_basic}. Vehicles traveling along the on--ramp lane 0 are trying to merge with mainline vehicles on lane 1. To minimize their own travel delay, selfish mainline vehicles along lane 1 choose between two options. The first option is to stay \textit{steadfast} on the current lane 1 and merge with the on--ramp vehicles. The second option is to switch to the neighboring lane 2 and \textit{bypass} the merging with on--ramp vehicles. We then modeled the aggregate choice behavior of selfish mainline vehicles as a Wardrop equilibrium~\cite{wardrop1952some}. The results from simulations show that the selfish lane choice behavior worsens the social delay and if properly more mainline vehicles bypass instead of staying steadfast on the current lane, then the social traffic conditions can be improved.

In this work, building on~\cite{li2020game}, we employ a proportion of altruistic vehicles among the selfish mainline vehicles to improve the social traffic conditions. \textit{Selfish} mainline vehicles choose to stay steadfast or bypass to minimize their own travel delay, whereas \textit{altruistic} vehicles make the decision to stay steadfast or bypass to minimize their own altruistic cost, which is a weighted average of the travel delay and the marginal cost~\cite{beckmann1955studies,sandholm2002evolutionary}. Altruistic vehicles are individual optimizers which require local delay and cost information but no centralized coordination. With the presence of autonomous vehicles, it is envisioned that connected and autonomous vehicles can be employed as altruistic vehicles perceiving and minimizing configured altruistic costs. The weight configuration of the altruistic costs indicates how altruistic vehicles are.
Naturally, the first question arises: will the altruism improve the social traffic conditions in the on--ramp lane choice scenario? To answer this question, we consider when the altruistic costs are perfectly measured by altruistic vehicles, and we find the conditions under which the altruism helps to decrease or optimize the social delay. The conditions indicate altruism always improves the traffic conditions when altruistic vehicles are abundant.
However, for the scenarios when altruistic vehicles only have inaccurate estimates of the altruistic costs, how altruistic should vehicles be?
We then give the optimal weight configuration for altruistic vehicles that minimizes the worst case social delay under such uncertainty.


\section{The Model}\label{sec:model}

\begin{figure}
\centering
\begin{tikzpicture}[scale=0.4]
  \fill[gray!30] (0,0)-- (0,12) 
  --(6,12) --(6,10) -- (9,7)  
  -- (9,4) -- (11,0)
  -- (8,0)-- (6,4)-- (6,0);

  \draw[very thick] (0,0)-- (0,12);
  \draw[very thick] (6,12) --(6,10) -- (9,7)  
  -- (9,4) -- (11,0);
  \draw[very thick] (8,0)-- (6,4)-- (6,0);

  \draw[very thick, dash pattern=on 5pt off 5pt, white] (3,0) -- (3,12);

    \node at (9.5,0.5) {0}; 
    \node at (4.5,0.5) {1};  
    \node at (1.5,0.5) {2};

  \draw[very thick, dashed, darkblue] (4.5,1) -- (4.5,12);
  \draw[very thick, dashed, carorange] (4.5,1)-- (4.5,3)  .. controls (3,4) and (2,8) .. (1.5,12);
  
\node[darkblue,scale=1] at (5.5,6) {$\hat{x}^s_1$};
  \node[carorange,scale=1] at (1.5,6) {$\hat{x}^b_1$};

\end{tikzpicture}
\caption{Problem setting: mainline vehicles on lane 1 (both selfish and altruistic) choose to stay steadfast on lane 1 or bypass the merging with on--ramp vehicles.}
\label{fig:diverge_basic}
\end{figure}
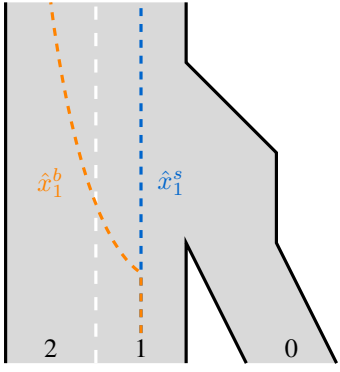


Let $I=\{0,1,2\}$ be the lane index set for the highway on--ramp in Figure~\ref{fig:diverge_basic},
where lane 0 is the on--ramp lane, lane 1 is the outermost lane in the mainline and vehicles on lane 1 make decisions to stay steadfast or bypass.
For lane $i\in\{0,2\}$, we let $n_i$ be the normalized flow on lane $i$, which indicates the relative magnitude of flow on lane $i$ among the neighboring flows of lane 1. Note that $n_0+n_2=1$. We consider $n_0$ and $n_2$ to be known and static. We then collect them in the neighboring flow configuration vector $\mathbf{N}:=(n_0,n_2)$. For lane 1,  let $x_1^{\rm s}$ (resp., $\Tilde{x}_1^{\rm s}$) represent the proportion flow of selfish (resp., altruistic) steadfast vehicles on lane 1, and let $x_1^{\rm b}$ (resp., $\Tilde{x}_1^{\rm b}$) represent the proportion flow of selfish (resp., altruistic) bypassing vehicles on lane 1. Now we collect the proportion flows on lane 1 in the flow distribution vector $\mathbf{x}:= (x_1^{\rm s}, x_1^{\rm b},\Tilde{x}_1^{\rm s}, \Tilde{x}_1^{\rm b})$. Let $\alpha\geq 0$ be the altruistic ratio, i.e., the proportion of altruistic vehicles among all the vehicles. A flow distribution vector is feasible if and only if
\begin{align}
    &x_1^{\rm s} + x_1^{\rm b} =1-\alpha,\label{eq:1}\\
    &\Tilde{x}_1^{\rm s} + \Tilde{x}_1^{\rm b} =\alpha,\label{eq:2}\\
    &x_1^{\rm s}\geq 0, x_1^{\rm b}\geq 0, \Tilde{x}_1^{\rm s}\geq 0, \Tilde{x}_1^{\rm b}\geq 0.
\end{align}

Selfish vehicles will only choose the option that can minimize their own travel delay. For selfish vehicles, we employ the delay models that are calibrated and validated in our previous work~\cite{li2020game}. Notice that the selfish flow proportions in the original delay models have to be replaced by the total flow proportions including both selfish and altruistic vehicles choosing the same option. For simplicity of future reference, we let $\hat{x}^{\rm s}_1:=x_1^{\rm s}+\Tilde{x}_1^{\rm s}$ and $\hat{x}^{\rm b}_1:=x_1^{\rm b}+\Tilde{x}_1^{\rm b}$ be the total proportion of steadfast and bypassing vehicles. Note that
\begin{align}\label{eq:5}
    \hat{x}^{\rm s}_1+\hat{x}^{\rm b}_1=1.
\end{align}
Let $J_1^{\rm s}$ denote the travel delay experienced by selfish steadfast vehicles, and $J_1^{\rm b}$ denote the delay experienced by selfish bypassing vehicles. We have
\begin{align}
J_1^{\rm s}(\mathbf{x}) &= C_1^{\rm t} \mu(\hat{x}_1^{\rm s}+n_0)+C_1^{\rm m} \hat{x}_1^{\rm s} n_0,\label{eq:J_1^s_basic}\\
J_1^{\rm b}(\mathbf{x}) &= C_2^{\rm t} \left(  \gamma \hat{x}_1^{\rm b}+n_2  \right) + C_2^{\rm m}  \hat{x}_1^{\rm b} n_2.\label{eq:J_1^b_basic}
\end{align}
The cost coefficients are collected in the cost coefficient vector $\mathbf{C} := (C_i^{\rm t},C_i^{\rm m},\mu,\gamma: i \in \{1,2\})$. The coefficients are all non--negative constants for an on--ramp, which need to be calibrated for each on--ramp. The detailed explanation of the coefficients can be seen in~\cite{li2020game}. For simplicity of reference, we rewrite the travel delay models for selfish vehicles:
\begin{align}
J_1^{\rm s}(\mathbf{x}) &= K^{\rm s}\hat{x}_1^{\rm s}+B^{\rm s},\label{eq:J_1^s_basic_2}\\
J_1^{\rm b}(\mathbf{x}) &= K^{\rm b} \hat{x}_1^{\rm b} +B^{\rm b},\label{eq:J_1^b_basic}
\end{align}
where $K^{\rm s}:=C_1^{\rm t} \mu+C_1^{\rm m}  n_0$, $B^{\rm s}:=C_1^{\rm t}  \mu n_0$, $K^{\rm b}:=C_2^{\rm t} \gamma+C_2^{\rm m}  n_2$ and $B^{\rm b}:=C_2^{\rm t}  n_2$ are all non--negative constants for an on--ramp with a given neighboring flow configuration $\mathbf{N}$. Essentially, we care about the social traffic conditions, i.e., the social delay. We also employ the delay models proposed in~\cite{li2020game} for vehicles on lane 2 and on--ramp vehicles. Let $J_0$ denote the delay experienced by on--ramp vehicles and $J_2$ denote the delay experienced by vehicles on lane 2. We have
\begin{align}
J_0(\mathbf{x}) &= K^{\rm s}\hat{x}_1^{\rm s}+B^{\rm s},\label{eq:J_0}\\
J_2(\mathbf{x}) &= K_2 \hat{x}_1^{\rm b} +B^{\rm b},\label{eq:J_2}
\end{align}
where $K_2:=C_2^{\rm t}+C_2^{\rm m} n_2$ is also a  non--negative constant for an on--ramp with a given neighboring flow configuration. Notice that vehicles choosing the same option experience the same travel delay no matter they are selfish or altruistic. Therefore, the social delay can be expressed as
\begin{align}\label{eq:social_cost}
   J_{\rm soc}(\mathbf{x})=\hat{x}_1^{\rm s} J_1^{\rm s}(\mathbf{x}) + \hat{x}_1^{\rm b} J_1^{\rm b}(\mathbf{x})+ n_0 J_0(\mathbf{x})+n_2 J_2(\mathbf{x}),
\end{align}
which is a convex function of $\mathbf{x}$ when $\mathbf{N}$ and $\mathbf{C}$ are given. Note that by Equation~\eqref{eq:5}, we always have $\hat{x}_1^{\rm s}=1-\hat{x}_1^{\rm b}$. In the later proofs, we tend to use $\hat{x}_1^{\rm b}$ instead of $\mathbf{x}$ as the self--variable of the delay functions.
Let $J_{\rm opt}:=\underset{\hat{x}_1^{\rm b}\in[0,1]}{\text{min}}J_{\rm soc}(\mathbf{x})$ denote the optimal social delay. Ideally, for an on--ramp with a given neighboring flow configuration, we aim to employ altruistic vehicles to decrease the social delay to its minimum.

Altruistic vehicles also choose to stay steadfast or bypass to minimize their own cost. However, to improve the social traffic conditions, they are configured to perceive an altruistic cost which is different from the travel delay. We propose the altruistic cost to be the weighted sum of the travel delay and the marginal cost (see~\cite{beckmann1955studies,sandholm2002evolutionary} for more details).
Let $\Tilde{J}_1^{\rm s}$ denote the altruistic cost for altruistic steadfast vehicles, and $\Tilde{J}_1^{\rm b}$ denote the cost for altruistic bypassing vehicles. We have 
\begin{align}
\Tilde{J}_1^{\rm s}(\mathbf{x}) &= (1-\beta)J_1^{\rm s}(\mathbf{x})+\beta\frac{\partial J_{\rm soc}(\mathbf{x})}{\partial \hat{x}_1^{\rm s}}\\
&=J_1^{\rm s}(\mathbf{x})+\beta K^{\rm s}(\hat{x}_1^{\rm s}+n_0),\label{eq:J_1^s_altruistic}
\end{align}
\begin{align}
\Tilde{J}_1^{\rm b}(\mathbf{x}) &=(1-\beta) J_1^{\rm b}(\mathbf{x})+\beta\frac{\partial J_{\rm soc}(\mathbf{x})}{\partial \hat{x}_1^{\rm b}}\\
&= J_1^{\rm b}(\mathbf{x})+\beta (K^{\rm b}\hat{x}_1^{\rm b}+ K_2n_2),\label{eq:J_1^b_altruistic}
\end{align}
where $0\leq \beta \leq1$ is the altruism level of the altruistic vehicles, which acts as the weight configuration of the altruistic costs. The altruism level is assigned to the altruistic vehicles by a central authority and can be interpreted as the propensity of altruistic vehicles to optimize the social delay. When $\beta=0$, altruistic vehicles behave exactly like selfish vehicles; when the altruistic ratio $\alpha=1$ and altruism level $\beta=1$, the resulting social delay is minimized. In this work, we consider the altruism level to be the same for all altruistic vehicles at the on--ramp. 
Note that altruistic vehicles are not explicitly coordinated, but are still individual optimizers evaluating the altruistic cost instead of the travel delay. Therefore, just as the selfish vehicles in~\cite{li2020game}, the equilibrium of the choice behavior of altruistic vehicles can be formulated as a Wardrop equilibrium~\cite{wardrop1952some}. Let a tuple $G = (\mathbf{N},\mathbf{C})$ be the full configuration of a highway on--ramp shown in Figure~\ref{fig:diverge_basic}, where $\mathbf{N}$ is the static neighboring flow configuration and $\mathbf{C}$ is the cost coefficient vector. The resulting choice equilibrium of the mixed selfish and altruistic vehicles is then defined as below.
\begin{definition}\label{def:wdp_basic}
For a given on--ramp configuration $G = (\mathbf{N}, \mathbf{C})$, a flow distribution vector $\mathbf{x}$ is a choice equilibrium if and only if
\begin{subequations}\label{eq:eq_def}
    \begin{align}
    x_1^{\rm s} (J_1^{\rm s}(\mathbf{x}) - J_1^{\rm b}(\mathbf{x})) &\leq 0 ,\\
    x_1^{\rm b} (J_1^{\rm b}(\mathbf{x}) - J_1^{\rm s}(\mathbf{x})) &\leq 0,\\
    \Tilde{x}_1^{\rm s} (\Tilde{J}_1^s(\mathbf{x}) - \Tilde{J}_1^b(\mathbf{x})) &\leq 0 ,\\
    \Tilde{x}_1^{\rm b} (\Tilde{J}_1^b(\mathbf{x}) - \Tilde{J}_1^s(\mathbf{x})) &\leq 0.
    \end{align}
\end{subequations}
\end{definition}

\noindent The definition mathematically expresses that selfish vehicles only choose the option with their own minimized travel delay, whereas altruistic vehicles choose the option with their own minimized altruistic cost. 


\section{When no uncertainty exists}\label{sec:nouncertainty}

In this section, we consider the scenarios when no uncertainty lies in the measurement of travel delay and altruistic costs. We aim to analyze the impact of the altruistic ratio $\alpha$ and the altruism level $\beta$ on the resulting social delay.

In the rest of the paper, we only discuss a certain meaningful set of on--ramp configurations. As discussed in~\cite{li2020game}, when all vehicles are selfish, properly encouraging the bypassing behavior of mainline vehicles can improve the social delay. Thus in this work, we only focus on the on--ramp configurations where selfish vehicles bypass less than the socially optimal scenario. Moreover, when all vehicles are selfish, at the choice equilibrium, if all the selfish vehicles are bypassing vehicles, we cannot decrease the social delay anymore by letting more vehicles bypass even with altruistic vehicles; if all the selfish vehicles choose to stay steadfast, as discussed in~\cite{li2020game}, for any $x_1^{\rm b}\in(0,1]$, we have $J_1^{\rm s}(x_1^{\rm b})< J_1^{\rm b}(x_1^{\rm b})$. Naturally, when there are altruistic vehicles, for any $\hat{x}_1^{\rm b}\in(0,1]$, we have $J_1^{\rm s}(\hat{x}_1^{\rm b})< J_1^{\rm b}(\hat{x}_1^{\rm b})$. Therefore, $x_1^{\rm b}=0$ for any altruistic ratio and any altruism level, and the choice equilibrium is only dependent on altruistic vehicles' choices. The choice equilibrium of a single class of vehicles is then much the same as discussed in~\cite{li2020game}. Furthermore, consider the scenario when all vehicles are bypassing vehicles at the socially optimal equilibrium, then to optimize the social delay, we have to make sure all altruistic vehicles always choose bypassing, i.e., $\Tilde{J}_1^{\rm s}(\hat{x}_1^{\rm b})> \Tilde{J}_1^{\rm b}(\hat{x}_1^{\rm b})$, for any $\hat{x}_1^{\rm b}\in[0,1)$, thus the problem totally depends on selfish vehicles' choice, thus becomes again very similar to what is discussed in~\cite{li2020game}.

Therefore, in the rest of the paper, we only consider the on--ramp configurations where we are able to employ altruistic vehicles to move the less efficient interior equilibrium when all vehicles are selfish to a socially optimal interior equilibrium with more bypassing vehicles.

Let $\Phi$ denote the bypassing flow at the interior equilibrium when all vehicles are selfish, which satisfies $J_1^{\rm s}(\Phi)=J_1^{\rm b}(\Phi)$. Solving the equation, we have
\begin{align}
    \Phi=\frac{K^{\rm s}+B^{\rm s}-B^{\rm b}}{K^{\rm s}+K^{\rm b}}.
\end{align}
Note that $\Phi$ is only dependent on the on--ramp configuration. For a specific on--ramp configuration, $\Phi$ is a constant. Moreover, recalling Equation~\eqref{eq:social_cost}, the social delay $J_{\rm soc}(\hat{x}_1^{\rm b})$is a convex quadratic function of $\hat{x}_1^{\rm b}$. Let $\Delta$ be the unique global minimization point of the quadratic function, i.e., $J_{\rm soc}(\Delta)=\underset{\hat{x}_1^{\rm b}\in \mathbb{R}}{\text{min}}J_{\rm soc}(\hat{x}_1^{\rm b})$. Note that $\Delta$ is only dependent on the on--ramp configuration. For a specific on--ramp configuration $G$, $\Delta$ is a constant regardless of any $\alpha$ or $\beta$ and we always have $J_{\rm soc}(\Delta)\leq J_{\rm opt}$. If the social delay decreases when the bypassing proportion increases for some on--ramp configurations, we must have $\Phi<\Delta$. Also, to ensure at the social optimum, not all vehicles are bypassing vehicles, we have $\Delta<1$.

Let $\mathcal{G}$ be the meaningful set of on--ramp configurations, and we then have
\begin{align}
    \mathcal{G}=\{G:0<\Phi<\Delta<1\}.
\end{align}

\begin{figure}
\centering
 \includegraphics[width=0.4\textwidth]{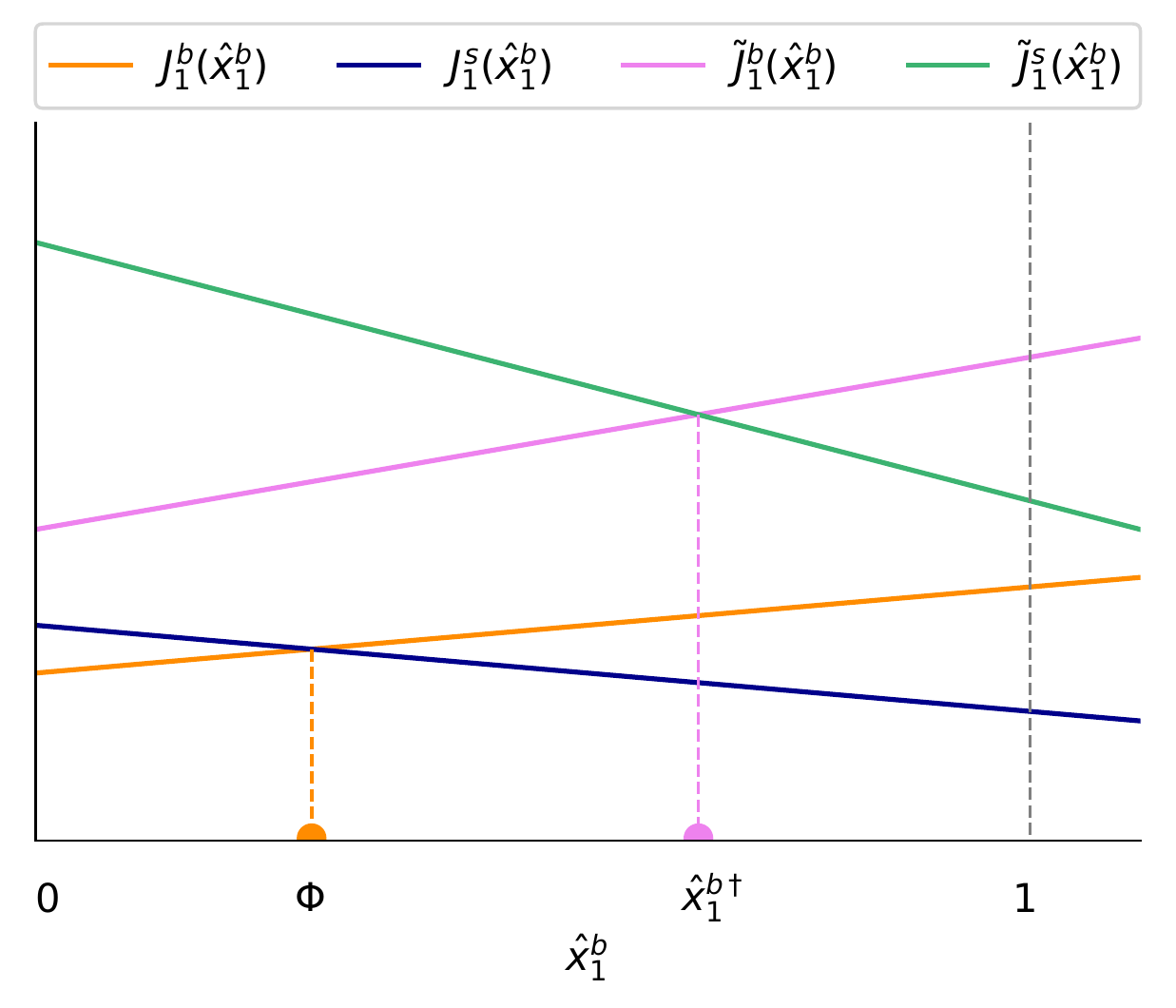}
    \caption{Sketch of the travel delay and altruistic cost functions, where $\Phi$ is indicated by the yellow dot and $\hat{x}_1^{\rm b\dagger}$ is indicated by the pink dot.}\label{fig:JJ}
\end{figure}

\noindent We then are ready to give the first core result in this work, which establishes the conditions for the altruistic vehicles' configurations to improve the social conditions or to reach the optimal social conditions.
\begin{theorem}\label{thm:altruistic ratio start}
For a given on--ramp configuration $G = (\mathbf{N}, \mathbf{C})\in \mathcal{G}$ with altruistic ratio $\alpha$ and altruism level $\beta$.
\begin{itemize}
    \item The social delay is decreased by altruistic vehicles , i.e., $J_{\rm soc}(\hat{x}_1^{\rm b})<J_{\rm soc}(\Phi)$, if and only if $\beta > 0$ and $\alpha\in \mathcal{A}_1$, where $\mathcal{A}_1:=(\Phi,1]$.
\item The social delay is optimized by altruistic vehicles, i.e., $J_{\rm soc}(\hat{x}_1^{\rm b})=J_{\rm opt}$, if and only if $\beta = 1$ and $\alpha\in \mathcal{A}_2$, where $\mathcal{A}_2:=\left[\Delta,1\right]$.
\end{itemize}
\end{theorem}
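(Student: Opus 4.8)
The plan is to collapse the problem onto the single scalar $\hat{x}_1^{\rm b}$, the total bypassing flow at the choice equilibrium of Definition~\ref{def:wdp_basic}, and to express it explicitly as a function of $\alpha$ and $\beta$. Recall that, by \eqref{eq:social_cost} and \eqref{eq:5}, $J_{\rm soc}$ depends on $\mathbf{x}$ only through $\hat{x}_1^{\rm b}$ and is a strictly convex quadratic in $\hat{x}_1^{\rm b}$ with unique unconstrained minimizer $\Delta$; hence it is strictly decreasing on $(-\infty,\Delta]$, and since $G\in\mathcal{G}$ gives $0<\Phi<\Delta<1$ we have $J_{\rm opt}=J_{\rm soc}(\Delta)$. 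Thus the two bullets become pure location statements about the equilibrium: ``$J_{\rm soc}(\hat{x}_1^{\rm b})<J_{\rm soc}(\Phi)$'' is equivalent to ``$\hat{x}_1^{\rm b}\in(\Phi,\Delta]$'', and ``$J_{\rm soc}(\hat{x}_1^{\rm b})=J_{\rm opt}$'' is equivalent to ``$\hat{x}_1^{\rm b}=\Delta$''. So everything reduces to pinning down the map $(\alpha,\beta)\mapsto\hat{x}_1^{\rm b}$.

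First I would analyze the altruistic-cost crossing $\hat{x}_1^{\rm b\dagger}(\beta)$ (the pink dot in Figure~\ref{fig:JJ}). Using \eqref{eq:J_1^s_altruistic}--\eqref{eq:J_1^b_altruistic} with $\hat{x}_1^{\rm s}=1-\hat{x}_1^{\rm b}$, the difference $\Tilde{J}_1^{\rm s}-\Tilde{J}_1^{\rm b}$ is affine in $\hat{x}_1^{\rm b}$ with slope $-(1+\beta)(K^{\rm s}+K^{\rm b})<0$, so it has a unique zero $\hat{x}_1^{\rm b\dagger}(\beta)$; solving the linear equation gives a closed form. I would then check the endpoints $\hat{x}_1^{\rm b\dagger}(0)=\Phi$ (the altruistic costs collapse to the selfish ones \eqref{eq:J_1^s_basic_2}) and $\hat{x}_1^{\rm b\dagger}(1)=\Delta$, and differentiate in $\beta$ to see that $\hat{x}_1^{\rm b\dagger}(\cdot)$ is strictly monotone on $[0,1]$ with the sign of its derivative equal to the sign of $\Delta-\Phi$. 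Since $G\in\mathcal{G}$ forces $\Phi<\Delta$, this yields the structural fact that $\hat{x}_1^{\rm b\dagger}$ is strictly increasing, with $\hat{x}_1^{\rm b\dagger}(0)=\Phi$, $\hat{x}_1^{\rm b\dagger}(1)=\Delta$, and $\Phi<\hat{x}_1^{\rm b\dagger}(\beta)<\Delta$ for $\beta\in(0,1)$; in particular $\hat{x}_1^{\rm b\dagger}(\beta)>\Phi$ iff $\beta>0$ and $\hat{x}_1^{\rm b\dagger}(\beta)=\Delta$ iff $\beta=1$.

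Next I would determine the equilibrium by a case analysis on the position of $\hat{x}_1^{\rm b}$ relative to $\Phi$ and $\hat{x}_1^{\rm b\dagger}(\beta)$. From \eqref{eq:J_1^s_basic_2} and the affine form of $J_1^{\rm b}$, and since $K^{\rm s}+K^{\rm b}>0$, one has $J_1^{\rm s}\lessgtr J_1^{\rm b}$ according as $\hat{x}_1^{\rm b}\gtrless\Phi$, and likewise $\Tilde{J}_1^{\rm s}\lessgtr\Tilde{J}_1^{\rm b}$ according as $\hat{x}_1^{\rm b}\gtrless\hat{x}_1^{\rm b\dagger}(\beta)$. Feeding these into Definition~\ref{def:wdp_basic} together with the feasibility constraints \eqref{eq:1}--\eqref{eq:2}, and using $\beta>0\Rightarrow\Phi<\hat{x}_1^{\rm b\dagger}(\beta)$, I expect the equilibrium value of $\hat{x}_1^{\rm b}$ to be unique and equal to the median of $\Phi,\,\alpha,\,\hat{x}_1^{\rm b\dagger}(\beta)$ --- equivalently, $\alpha$ clipped to $[\Phi,\hat{x}_1^{\rm b\dagger}(\beta)]$: if $\alpha\le\Phi$ the selfish vehicles are indifferent and carry the residual bypass flow $\Phi-\alpha$, so $\hat{x}_1^{\rm b}=\Phi$; if $\Phi<\alpha<\hat{x}_1^{\rm b\dagger}(\beta)$ every altruist bypasses and no selfish vehicle does, so $\hat{x}_1^{\rm b}=\alpha$; and if $\alpha\ge\hat{x}_1^{\rm b\dagger}(\beta)$ the altruists are indifferent at their own crossing, so $\hat{x}_1^{\rm b}=\hat{x}_1^{\rm b\dagger}(\beta)$. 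When $\beta=0$ the altruists behave exactly like selfish vehicles, so $\hat{x}_1^{\rm b}=\Phi$ for every $\alpha$. The part I expect to be the main obstacle is precisely this equilibrium characterization: one must track that the selfish vehicles can carry positive bypass flow only in the regime $\alpha\le\Phi$, and must rule out spurious solutions in all the boundary sub-cases of Definition~\ref{def:wdp_basic} to obtain a genuinely single-valued $\hat{x}_1^{\rm b}$.

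Finally I would assemble the two equivalences. For the first bullet, the equilibrium median is $>\Phi$ precisely when both $\hat{x}_1^{\rm b\dagger}(\beta)>\Phi$ (i.e. $\beta>0$) and $\alpha>\Phi$; moreover in every case $\hat{x}_1^{\rm b}\le\hat{x}_1^{\rm b\dagger}(\beta)\le\Delta$, so in that regime $\hat{x}_1^{\rm b}\in(\Phi,\Delta]$ and the location statement of the first paragraph holds --- giving $J_{\rm soc}(\hat{x}_1^{\rm b})<J_{\rm soc}(\Phi)$ iff $\beta>0$ and $\alpha\in(\Phi,1]=\mathcal{A}_1$. For the second bullet, $\hat{x}_1^{\rm b}=\Delta$ together with $\hat{x}_1^{\rm b}\le\hat{x}_1^{\rm b\dagger}(\beta)\le\Delta$ forces $\hat{x}_1^{\rm b\dagger}(\beta)=\Delta$, hence $\beta=1$ by the structural fact, and then the median of $\Phi,\alpha,\Delta$ equals $\Delta$ iff $\alpha\ge\Delta$; conversely $\beta=1$ and $\alpha\ge\Delta$ give $\hat{x}_1^{\rm b}=\hat{x}_1^{\rm b\dagger}(1)=\Delta$ at once --- so $J_{\rm soc}(\hat{x}_1^{\rm b})=J_{\rm opt}$ iff $\beta=1$ and $\alpha\in[\Delta,1]=\mathcal{A}_2$. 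Mapping the $\hat{x}_1^{\rm b}$-locations back to $J_{\rm soc}$-values via strict convexity completes the proof.
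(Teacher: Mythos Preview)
Your proposal is correct and follows essentially the same route as the paper: compute the altruistic crossing $\hat{x}_1^{\rm b\dagger}(\beta)$, show it is increasing with $\hat{x}_1^{\rm b\dagger}(0)=\Phi$ and $\hat{x}_1^{\rm b\dagger}(1)=\Delta$, then determine the equilibrium by a case split on the position of $\hat{x}_1^{\rm b}$ relative to $\Phi$ and $\hat{x}_1^{\rm b\dagger}$, and finally read off the two bullets via the convexity of $J_{\rm soc}$. Your packaging of the equilibrium as the median of $\Phi,\alpha,\hat{x}_1^{\rm b\dagger}(\beta)$ is exactly the paper's cases (b), (c), (d) compressed into one formula; one minor caution is that your opening equivalence ``$J_{\rm soc}(\hat{x}_1^{\rm b})<J_{\rm soc}(\Phi)\Leftrightarrow\hat{x}_1^{\rm b}\in(\Phi,\Delta]$'' is only valid once you have the a priori bound $\hat{x}_1^{\rm b}\le\Delta$, which you do establish later --- just make the order of dependence explicit when you write it out.
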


\begin{proof}
Let $\hat{x}_1^{\rm b\dagger}$ be the total bypassing proportion flow at the intersection of the altruistic steadfast and bypassing costs, i.e., $\Tilde{J}_1^{\rm s}(\hat{x}_1^{\rm b\dagger})=\Tilde{J}_1^{\rm b}(\hat{x}_1^{\rm b\dagger})$. Solving the equation, we have
\begin{align}\label{eq:betafunc}
    \hat{x}_1^{\rm b\dagger}=\frac{1-\beta}{1+\beta}\Phi+\frac{2\beta}{1+\beta}\Delta.
\end{align}
Notice that $\hat{x}_1^{\rm b\dagger}$ is a function of the altruism level $\beta$ for a certain on--ramp configuration. When $\beta=0$, the altruistic costs are exactly the same as the selfish delay. Therefore, altruistic vehicles behave the same as selfish vehicles. Manipulating altruistic vehicles would bring no change to the social delay. In the meaningful set $\mathcal{G}$, we always have $\Phi<\Delta$. When $\beta>0$, we have $\Phi<\hat{x}_1^{\rm b\dagger}$ and $\hat{x}_1^{\rm b\dagger}$ is always an increasing function of $\beta$. Note that $\hat{x}_1^{\rm b\dagger}$ can be seen as a weighted average of $\Phi$ and $\Delta$. We always have $\hat{x}_1^{\rm b\dagger} \leq \Delta$ and when $\beta=1$, we have $\hat{x}_1^{\rm b\dagger} = \Delta$. We then enumerate all the possible cases of the resulting equilibria when altruistic vehicles are involved, i.e., $\alpha>0$ and the altruism level $\beta>0$. See Figure~\ref{fig:JJ} which sketches the delays and altruistic costs, at the resulting equilibrium,
\begin{itemize}
    \item Case (a): if $\hat{x}_1^{\rm b}\in [0,\Phi)$, since the bypassing delay is smaller than the steadfast delay and the bypassing altruistic cost is smaller than the steadfast altruistic cost, all vehicles will choose bypassing, i.e., $\hat{x}_1^{\rm b}=1$. Since $\Phi<1$, the conclusion $\hat{x}_1^{\rm b}=1$ contradicts the assumption that $\hat{x}_1^{\rm b}<\Phi$. Therefore, the equilibrium cannot lie in this case.
    \item Case (b): if $\hat{x}_1^{\rm b} = \Phi$, the bypassing altruistic cost is smaller than the steadfast altruistic cost, thus all altruistic vehicles are bypassing vehicles, i.e., $\Tilde{x}_1^{\rm b}=\alpha$. Therefore, we have $x_1^{\rm b}=\Phi-\alpha$. The requirement for this case to happen is $x_1^{\rm b}\geq 0$, i.e. $\alpha\leq \Phi$.
    \item Case (c): if $\hat{x}_1^{\rm b}\in(\Phi,\hat{x}_1^{\rm b\dagger})$, the bypassing altruistic cost is smaller than the steadfast altruistic cost, thus all altruistic vehicles choose bypassing, i.e., $\Tilde{x}_1^{\rm b}=\alpha$. However, the bypassing delay is larger than the steadfast delay, thus all selfish vehicles will stay steadfast, i.e., $x_1^{\rm b}=0$. Therefore, we have $\hat{x}_1^b=\Tilde{x}_1^{\rm b}=\alpha$. Thus, the requirement of this case is $\Phi<\alpha<\hat{x}_1^{\rm b\dagger}$.
    \item Case (d): if $\hat{x}_1^{\rm b}= \hat{x}_1^{\rm b\dagger}$, the bypassing delay is larger than the steadfast delay, thus all selfish vehicles will stay steadfast, i.e., $x_1^{\rm b}=0$. Then we have $\hat{x}_1^b=\Tilde{x}_1^{\rm b}=\hat{x}_1^{\rm b\dagger}$. The requirement for this case is $\Tilde{x}_1^{\rm b}\leq \alpha$, i.e., $\hat{x}_1^{\rm b\dagger}\leq \alpha$.
    \item Case (e): if $\hat{x}_1^{\rm b}\in(\hat{x}_1^{\rm b\dagger},1]$, since bypassing delay is larger than the steadfast delay and the bypassing altruistic cost is larger than the steadfast altruistic cost, all vehicles must stay steadfast, i.e., $\hat{x}_1^{\rm b}=0$. Since $\hat{x}_1^{\rm b\dagger}>\Phi>0$, the equilibrium cannot lie in this case.
\end{itemize}

Consider the altruism level $\beta>0$ as fixed, and consider the altruistic ratio $\alpha$ as a variable. Recall that when $\alpha=0$, all vehicles are selfish, and at the equilibrium, $\hat{x}_1^b=\Phi$, we have the social delay as $J_{\rm soc}(\Phi)$. We then only need to consider case (b), (c) and (d). In case (b), the social delay remains the same as $J_{\rm soc}(\Phi)$, whereas in case (c) and (d), at the equilbrium, we have $\Phi<\hat{x}_1^{\rm b}\leq \Delta$. Recall that the social delay function~\eqref{eq:social_cost} is a convex quadratic function of $\hat{x}_1^{\rm b}$ with a minimization point $\Delta$, thus we have $J_{\rm soc}(\hat{x}_1^{\rm b})<J_{\rm soc}(\Phi)$. Since $0<\Delta<1$, then $J_{\rm opt}=J_{\rm soc}(\Delta)$, and therefore, the optimal social delay is only reached when $\hat{x}_1^{\rm b}=\hat{x}_1^{\rm b\dagger}=\Delta$, i.e., $\beta=1$ and the condition for case (c) is fulfilled.

In a nutshell, to decrease the social delay, we have to set $\beta>0$ and $\alpha>\Phi$. Otherwise, when $\beta=0$, altruistic vehicles act exactly like selfish vehicles, making no change; when $\alpha\leq\Phi$, the equilibrium is always in case (b) and we always have $\hat{x}_1^{\rm b}=\Phi$. The social delay remains the same as $J_{\rm soc}(\Phi)$. Since $0<\Delta<1$, we must have $J_{\rm opt}=J_{\rm soc}(\Delta)$. Therefore, to reach the optimal social delay, we have to let $\hat{x}_1^{\rm b}=\Delta$, which is only possible in case (d) when $\hat{x}_1^{\rm b\dagger}=\Delta$, i.e., $\beta=1$. See Figure~\ref{fig:social} as an example.

\end{proof}

Theorem~\ref{thm:altruistic ratio start} shows that with slightly altruistic vehicles, the social conditions can be improved as long as there are enough altruistic vehicles; however, to reach the optimal social traffic conditions, we have to employ enough purely altruistic vehicles. The altruism level of altruistic vehicles decides the best case of social delay we can do with abundant altruistic vehicles.

\begin{figure}
    \centering 
    \includegraphics[width=0.4\textwidth]{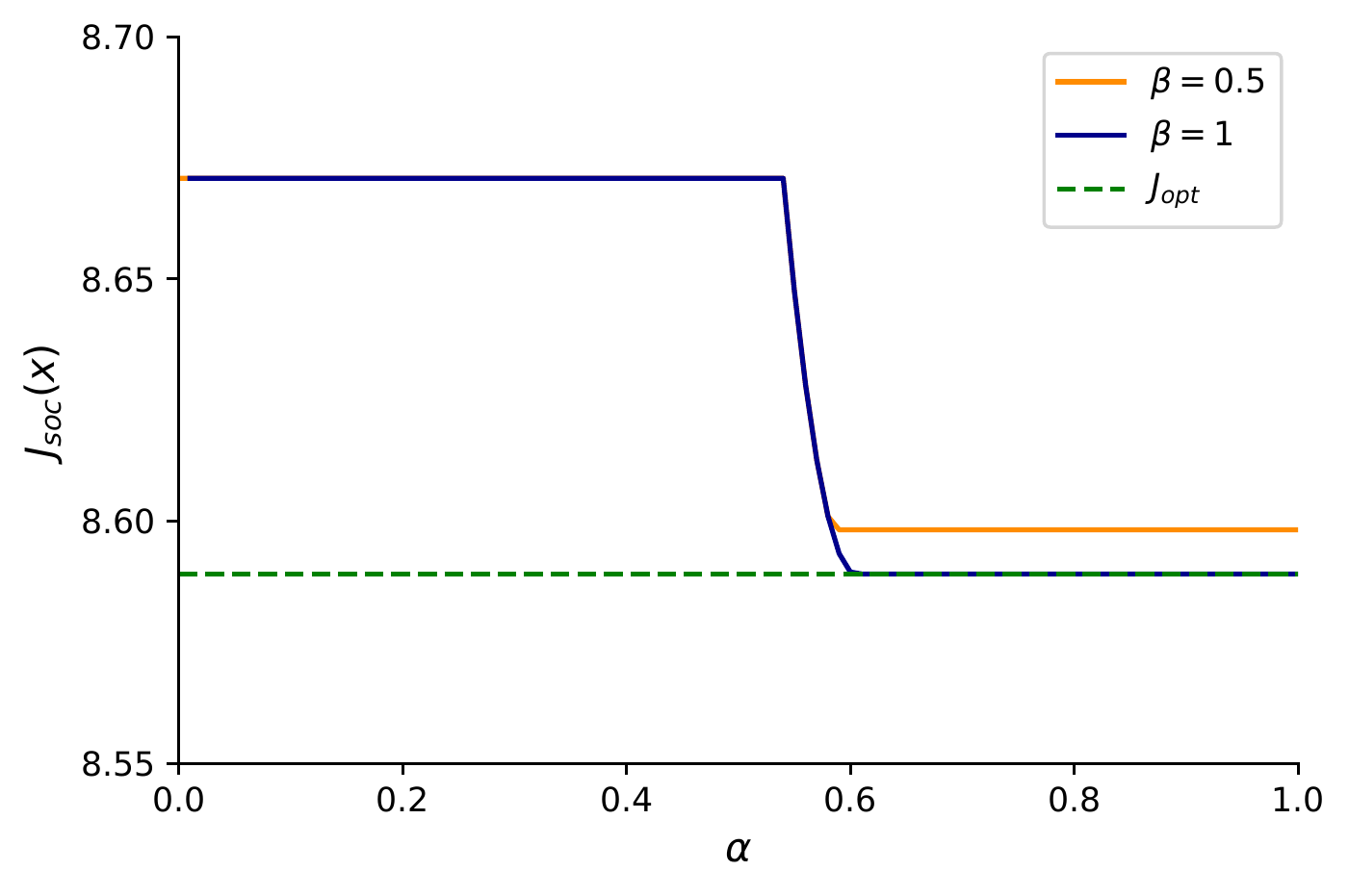}
    \caption{The social delay versus the altruistic ratio under different altruism levels. The on--ramp cost coefficients are
$C_1^t = C_2^t = 1,\ C_1^m = 21.3,\ C_2^m = 1,\ \mu=2.4,\ \gamma=8.6$ and the neighboring flow configuration is $n_0=0.37$. The on--ramp configuration lies in the meaningful set $\mathcal{G}$. As we can see, when altruistic vehicles are not abundant or the altruism level is less than 1, the social delay improvement is compromised. }\label{fig:social}
\end{figure}

\section{When uncertainty exists} \label{sec:uncertainty}

In reality, measuring the travel delay could be relatively easy whereas having an exact estimate of the altruistic part of the altruistic cost is hard. 
In this section, we assume that travel delay can be perfectly measured whereas some error $e\sim\mathcal{E}(e_{\rm L},e_{\rm U})$ is embedded in the altruistic costs, where $\mathcal{E}(e_{\rm L},e_{\rm U})$ is some probability distribution with a lower bound $e_{\rm L}> 0$ and an upper bound $e_{\rm U}>e_{\rm L}$. We assume all altruistic vehicles are affected by a homogeneous error, then the actual altruistic costs perceived by altruistic vehicles are
\begin{align}\label{eq:error}
\Tilde{J}_1^{\rm s}(\mathbf{x}) &= J_1^{\rm s}(\mathbf{x})+\beta e K^{\rm s}(\hat{x}_1^{\rm s}+n_0),\\
\Tilde{J}_1^{\rm b}(\mathbf{x}) &= J_1^{\rm b}(\mathbf{x})+\beta e (K^{\rm b}\hat{x}_1^{\rm b}+ K_2n_2).\label{eq:J_1^b_altruistic}
\end{align}

Without this uncertainty, we can easily use Theorem~\ref{thm:altruistic ratio start} to set $\beta=1$ and when altruistic vehicles are abundant, i.e., $\alpha\in\mathcal{A}_2$, the social delay reaches optimal. However, with this uncertainty, we may not optimize the social delay as we expected. Thus, we are interested in the worst case social delay for different choice of altruism levels with the presence of the uncertainty. In the end, we aim to find an optimal altruism level to minimize the worst case social delay with respect to the cost uncertainty and level of on--ramp configuration uncertainty. We first give the following definition of the generalized price of anarchy to characterize the worst case social delay in our problem setting.

\begin{definition}\label{def:poa}
For a given on--ramp configuration $G = (\mathbf{N}, \mathbf{C})\in \mathcal{G}$ with an altruism level $\beta$, the price of anarchy (PoA) is defined as
\begin{align}\label{eq:eq_def}
    PoA(G,\beta,\mathcal{E}):=\underset{e \sim\mathcal{E}(e_{\rm L},e_{\rm U})}{\text{sup}}\underset{\alpha\in \mathcal{A}_2}{\text{sup}}\frac{J_{\rm soc}(G,\beta,e,\alpha)}{J_{\rm opt}(G)}.
\end{align}
\end{definition}
\noindent Note that the design goal is to optimize the social delay, therefore, the price of anarchy only focuses on the altruistic ratios in the $\mathcal{A}_2$ range. Then we define the optimal altruism level that we are trying to find.

\begin{definition}\label{def:betastar}
For an on--ramp configuration $G\in \mathcal{G}$, the optimal altruism level $\beta^*$ satisfies
\begin{align}
    \beta^*=\text{arg}\ \underset{\beta\geq0}{\text{min}}\ PoA(G,\beta,\mathcal{E}).
\end{align}
\end{definition}

\noindent Let $\Pi:=\frac{1-\Phi}{2\Delta-\Phi-1}$. Note that in $\mathcal{G}$, we always have $1-\Phi>0$, thus $\Pi$ is nonzero. Let $\mathcal{G}_1:=\left\{G \in \mathcal{G}:0<\Pi<\sqrt{\frac{e_{\rm U}}{e_{\rm L}}}\right\}$ and $\mathcal{G}_2:=\{G \in \mathcal{G}\setminus\mathcal{G}_1\}$. We are ready to give the following theorem.

\begin{figure}
    \centering 
    \includegraphics[width=0.4\textwidth]{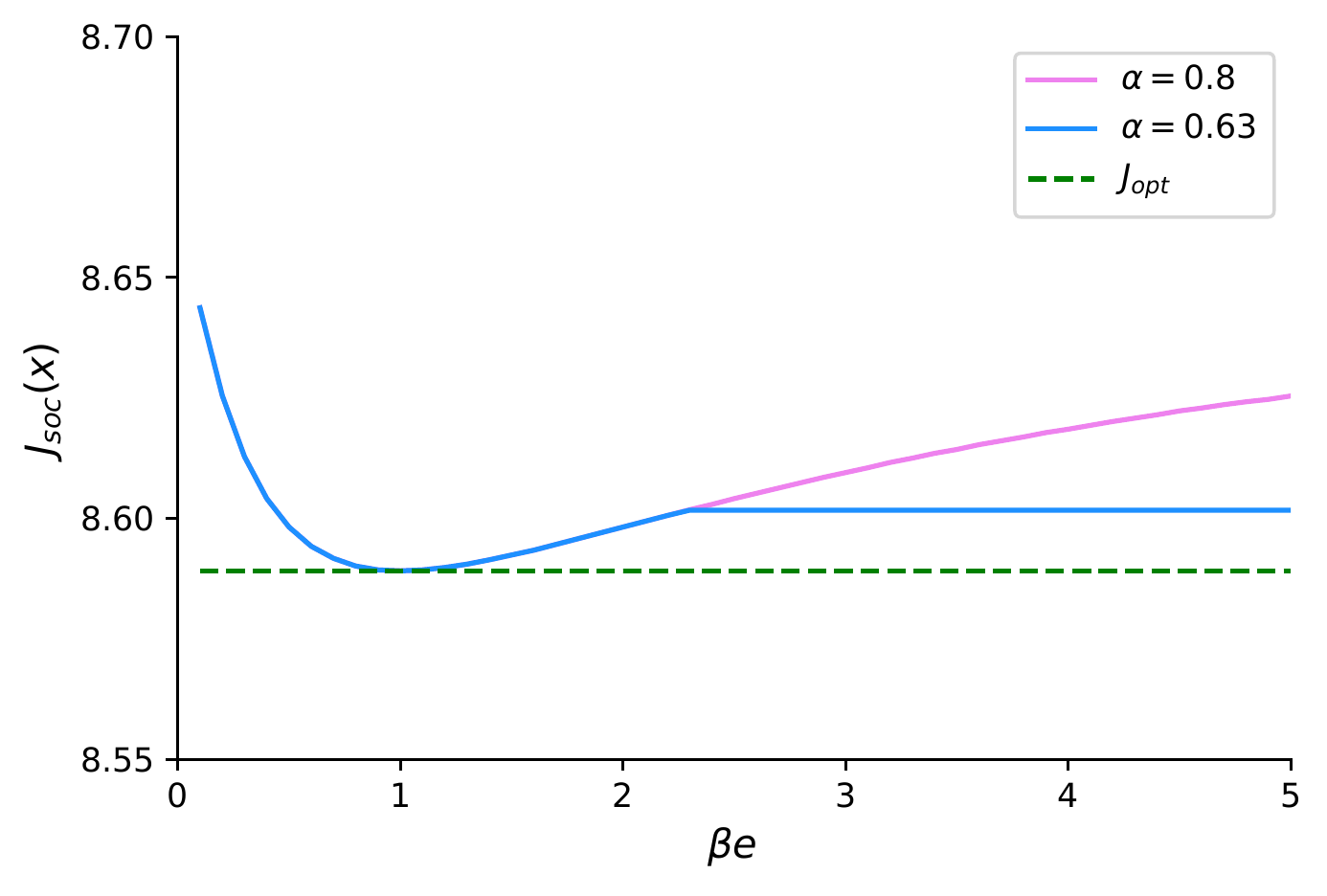}
    \caption{The social delay versus $\beta e$ under different altruistic ratios. The on--ramp cost coefficients are
$C_1^t = C_2^t = 1,\ C_1^m = 21.3,\ C_2^m = 1,\ \mu=2.4,\ \gamma=8.6$ and the neighboring flow configuration is $n_0=0.37$. The on--ramp configuration lies in the set $\mathcal{G}_2$. The worst case social delay happens on the pink curve. The optimal altruism level satisfies $\beta^*=\frac{1}{\sqrt{e_{\rm L}e_{\rm U}}}$.}\label{fig:social2}
\end{figure}

\begin{theorem}\label{thm:1}
For an on--ramp configuration $G\in\mathcal{G}_1$, the optimal altruism level $\beta^*=\frac{1}{e_{\rm L}\Pi}$; for an on--ramp configuration $G\in \mathcal{G}_2$, the optimal altruism level $\beta^*=\frac{1}{\sqrt{e_{\rm L}e_{\rm U}}}$.
\end{theorem}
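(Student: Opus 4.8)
The plan is to track the choice equilibrium of the perturbed game as a function of the triple $(\alpha,\beta,e)$, evaluate the two nested suprema in Definition~\ref{def:poa} one at a time, and then minimize the resulting closed-form PoA over $\beta$. First, the perturbed altruistic costs \eqref{eq:error} differ from the altruistic costs \eqref{eq:J_1^s_altruistic} only by the substitution $\beta\mapsto\beta e$, so the entire case enumeration in the proof of Theorem~\ref{thm:altruistic ratio start} applies verbatim after this substitution. In particular, from \eqref{eq:betafunc} the intersection of the altruistic steadfast and bypassing costs moves to
\begin{equation*}
\hat{x}_1^{\rm b\dagger}(\beta e)=\tfrac{1-\beta e}{1+\beta e}\,\Phi+\tfrac{2\beta e}{1+\beta e}\,\Delta ,
\end{equation*}
which is strictly increasing in $\beta e$, equals $\Phi$ at $\beta e=0$, equals $\Delta$ at $\beta e=1$, reaches $1$ exactly at $\beta e=\Pi$ when $\Pi>0$ (and never exceeds $2\Delta-\Phi<1$ when $\Pi<0$); note also $\Pi>1$ whenever $\Pi>0$, because $\Delta<1$. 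Running cases (b)--(d) of that proof for $\alpha\in\mathcal{A}_2=[\Delta,1]$ (hence $\alpha>\Phi$) shows that the equilibrium total bypassing flow is $\hat{x}_1^{\rm b}=\min\{\alpha,\hat{x}_1^{\rm b\dagger}(\beta e)\}$, clamped to lie in $[0,1]$.

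\emph{Collapsing the suprema.} For fixed $\beta,e$, as $\alpha$ ranges over $[\Delta,1]$ the equilibrium flow ranges over an interval whose largest element is $g(\beta e):=\min\{1,\hat{x}_1^{\rm b\dagger}(\beta e)\}$ and whose elements all lie in $[\Delta,\infty)$ unless the interval degenerates to the single point $\hat{x}_1^{\rm b\dagger}(\beta e)<\Delta$. Since $J_{\rm soc}(\cdot)$ in \eqref{eq:social_cost} is a convex quadratic in $\hat{x}_1^{\rm b}$ minimized at $\Delta$, in either case $\sup_{\alpha\in\mathcal{A}_2}J_{\rm soc}=J_{\rm soc}(g(\beta e))$. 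Next, $g(\beta e)$ is nondecreasing in $e$ and $J_{\rm soc}$ is $U$-shaped about $\Delta$, so $e\mapsto J_{\rm soc}(g(\beta e))$ is first nonincreasing then nondecreasing; its supremum over the support of $\mathcal{E}$ is therefore attained (in the limit) at $e_{\rm L}$ or $e_{\rm U}$, giving
\begin{equation*}
PoA(G,\beta,\mathcal{E})=\frac{1}{J_{\rm opt}}\,\max\bigl\{J_{\rm soc}(g(\beta e_{\rm L})),\,J_{\rm soc}(g(\beta e_{\rm U}))\bigr\}.
\end{equation*}

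\emph{Minimizing over $\beta$.} Put $h_{\rm L}(\beta)=J_{\rm soc}(g(\beta e_{\rm L}))$ and $h_{\rm U}(\beta)=J_{\rm soc}(g(\beta e_{\rm U}))$. Each decreases from $J_{\rm soc}(\Phi)$ down to $J_{\rm opt}$ (at $\beta=1/e_{\rm L}$, resp.\ $1/e_{\rm U}$), then increases, and then—when $\Pi>0$—becomes flat at $J_{\rm soc}(1)$ once $\beta e_{\rm L}\ge\Pi$, resp.\ $\beta e_{\rm U}\ge\Pi$. Because $e_{\rm U}>e_{\rm L}$, one has $h_{\rm U}<h_{\rm L}$ for small $\beta$ and $h_{\rm U}\ge h_{\rm L}$ for large $\beta$, so $\max\{h_{\rm L},h_{\rm U}\}$ is minimized at the balance $\beta$. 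Here I would use two elementary identities: the reflection $\hat{x}_1^{\rm b\dagger}(t)+\hat{x}_1^{\rm b\dagger}(1/t)=2\Delta$, and the symmetry $J_{\rm soc}(a)=J_{\rm soc}(2\Delta-a)$. If $\beta^2 e_{\rm L}e_{\rm U}=1$, then $\hat{x}_1^{\rm b\dagger}(\beta e_{\rm L})$ and $\hat{x}_1^{\rm b\dagger}(\beta e_{\rm U})$ are mirror images about $\Delta$, so $h_{\rm L}(\beta)=h_{\rm U}(\beta)$ as long as the larger one is not clamped, i.e.\ $\beta e_{\rm U}=\sqrt{e_{\rm U}/e_{\rm L}}\le\Pi$ or $\Pi<0$; this is exactly the condition $G\in\mathcal{G}_2$, and then $\beta^*=1/\sqrt{e_{\rm L}e_{\rm U}}$. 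If instead $G\in\mathcal{G}_1$, so $0<\Pi<\sqrt{e_{\rm U}/e_{\rm L}}$ (forcing $\Pi>1$), then $h_{\rm U}$ is already pinned at $J_{\rm soc}(1)$ by the time $h_{\rm L}$ descends to that same value, which happens at $\beta e_{\rm L}=1/\Pi$ since $J_{\rm soc}(1)=J_{\rm soc}(2\Delta-1)=J_{\rm soc}(\hat{x}_1^{\rm b\dagger}(1/\Pi))$; for $\beta<1/(e_{\rm L}\Pi)$ one has $h_{\rm L}(\beta)>J_{\rm soc}(1)$, hence a strictly larger PoA, whereas for $\beta\ge 1/(e_{\rm L}\Pi)$ the PoA equals $J_{\rm soc}(1)/J_{\rm opt}$, so the smallest minimizer is $\beta^*=1/(e_{\rm L}\Pi)$.

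\emph{Main obstacle.} The $\hat{x}_1^{\rm b\dagger}$ algebra and the two identities are routine; the delicate part is the bookkeeping around the clamp $\hat{x}_1^{\rm b}\le1$—precisely, verifying that the dividing line between ``the mirror-symmetric balance is attainable'' and ``clamping binds first'' is exactly $\Pi\lessgtr\sqrt{e_{\rm U}/e_{\rm L}}$, and observing that on $\mathcal{G}_1$ the set of minimizers is an interval whose left endpoint is the stated $\beta^*$.
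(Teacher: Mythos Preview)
Your proposal is correct and follows essentially the same route as the paper: reduce the perturbed model to an ``effective'' altruism level $\beta e$, identify the equilibrium bypass flow as $\min\{\alpha,\hat{x}_1^{\rm b\dagger}(\beta e)\}$, take the inner supremum over $\alpha\in\mathcal{A}_2$ to get $g(\beta e)=\min\{1,\hat{x}_1^{\rm b\dagger}(\beta e)\}$, use quasiconvexity to push the $e$-supremum to the endpoints, and then balance $h_{\rm L}$ against $h_{\rm U}$. The one cosmetic difference is that you organize the balancing via the reflection identity $\hat{x}_1^{\rm b\dagger}(t)+\hat{x}_1^{\rm b\dagger}(1/t)=2\Delta$ together with the quadratic symmetry $J_{\rm soc}(a)=J_{\rm soc}(2\Delta-a)$, whereas the paper writes out the balancing equation $\hat{x}_1^{\rm b\dagger}(\beta^*e_{\rm U})-\Delta=\Delta-\hat{x}_1^{\rm b\dagger}(\beta^*e_{\rm L})$ directly and splits on the sign of $\Pi$; the two are equivalent, and your identity makes the computation of $\beta^*=1/(e_{\rm L}\Pi)$ in the clamped case particularly transparent.
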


\begin{proof}
In the choice model considering the uncertainty, we see the product of $e$ and $\beta$ as the ``effective'' altruism level which corresponds to the altruism level in the cost models considering no uncertainty. The cases of equilibria are then enumerated in the proof of Theorem~\ref{thm:altruistic ratio start}. Now consider the on--ramp configuration $G$, the altruistic ratio $\alpha$ and the error bounds $e_{\rm L}$ and $e_{\rm U}$ are fixed and see the product of $\beta e$ as a variable.
Since in the definition of PoA, $\alpha\in \mathcal{A}_2$, we must have $\alpha> \Phi$. Therefore, we only need to consider case (c) and (d). In a nutshell, when $\alpha\geq \hat{x}_1^{\rm b\dagger}$, at the equilibrium, $\hat{x}_1^{\rm b}=\hat{x}_1^{\rm b\dagger}$, and the social delay is calculated as $J_{\rm soc}(\hat{x}_1^{\rm b\dagger})$; when $\alpha<\hat{x}_1^{\rm b\dagger}$, at the equilibrium, $\hat{x}_1^{\rm b}=\alpha$, and the social delay is calculated as $J_{\rm soc}(\alpha)$. Recall that $\hat{x}_1^{\rm b\dagger}$ is a function of $\beta e$ (see Equation~\eqref{eq:betafunc}), we then may use the notation $\hat{x}_1^{\rm b\dagger}(\beta e)$. Now let us consider
\begin{itemize}
\item Case (A): $\Pi<0$. In this case, the range $\mathcal{A}_2$ can be divided into two ranges.
    \begin{itemize}
        \item Case (A.1): For any $\alpha\in \left[\Delta,2\Delta-\Phi \right)$, the resulting equilibrium changes from case (d) to case (c) when $\hat{x}_1^{\rm b\dagger}(\beta e)=\alpha$. Let $\Tilde{\beta}$ denote the transition point, where $\hat{x}_1^{\rm b\dagger}(\Tilde{\beta})=\alpha$. Solving the equation, we have $\Tilde{\beta}(\alpha)=\frac{\alpha-\Phi}{2\Delta-\Phi-\alpha}\geq 1$, which is an increasing function of $\alpha$. For $\beta e\leq \Tilde{\beta}$, the equilibrium is at $\hat{x}_1^{\rm b}=\hat{x}_1^{\rm b\dagger}$, and the social delay can be calculated as $J_{\rm soc}(\hat{x}_1^{\rm b\dagger})$. Since the social delay function is a convex quadratic function of $\hat{x}_1^{\rm b\dagger}$,  $\hat{x}_1^{\rm b\dagger}$ is an increasing function of $\beta e$, and $\hat{x}_1^{\rm b\dagger}(\beta e=1)=\Delta$, the social delay will first decrease, reaches optimal at $\beta e=1$ and then increase to $J_{\rm soc}(\alpha)$. For $\beta e> \Tilde{\beta}$, the equilibrium remains at $\hat{x}_1^{\rm b}=\alpha$, and the social delay remains the same at $J_{\rm soc}(\alpha)$. Note that when $\alpha$ increases, since we have $\alpha\geq \Delta$, $J_{\rm soc}(\alpha)$ increases.\\
        For clarity, Figure~\ref{fig:social2} gives an example of an on--ramp configuration satisfying $\Pi<0$. The blue curve ($\alpha=0.63$) corresponds to case (A.1).
        
        \item Case (A.2): When $\alpha\in \left[2\Delta-\Phi,1\right]$, for any $\beta e\geq 0$, the equilibrium is at $\hat{x}_1^{\rm b}=\hat{x}_1^{\rm b\dagger}$. The social delay can be calculated as $J_{\rm soc}(\hat{x}_1^{\rm b\dagger})$ and similarly to case (A.1), the social delay will first decrease, reaches optimal at $\beta e=1$ and then increase with $\beta e$ increasing.\\
        The pink curve ($\alpha=0.8$) in Figure~\ref{fig:social2} corresponds to case (A.2).
\end{itemize}
 The shape of the social delay function in case (A.2) can be seen as a whole curve without the stage after $\Tilde{\beta}$ in case (A.1). For the on--ramp configurations in case (A), considering all $\alpha\in \mathcal{A}_2$, the equilibrium with the worst case social delay happens at case (A.2). Notice that the function $J_{\rm soc}(\hat{x}_1^{\rm b\dagger})$ is a convex quadratic function of $\hat{x}_1^{\rm b\dagger}$ and $\hat{x}_1^{\rm b\dagger}$ is an increasing function of $\beta e$. Therefore, to minimize the PoA, we let 
 \begin{align}
     \hat{x}_1^{\rm b\dagger}(\beta^*e_{\rm U})-\Delta=\Delta-\hat{x}_1^{\rm b\dagger}(\beta^*e_{\rm L}).
 \end{align}
 Thus, we have $\beta^*=\frac{1}{\sqrt{e_{\rm L}e_{\rm U}}}$, and $J_{\rm soc}(\hat{x}_1^{\rm b\dagger}(\beta^*e_{\rm L}))=J_{\rm soc}(\hat{x}_1^{\rm b\dagger}(\beta^*e_{\rm U}))$.

\item case (B): $\Pi>0$. In this case, for any $\alpha\in \mathcal{A}_2$, similar to case (A.1), the resulting equilibrium changes from case (d) to case (c) when $\hat{x}_1^{\rm b\dagger}=\alpha$. As discussed in case (A.1), for the on--ramp configurations in case (B), considering all $\alpha\in \mathcal{A}_2$, the equilibrium with the worst case social delay happens when $\alpha=1$. Note that $\Pi=\Tilde{\beta}(\alpha=1)$, which is exactly the transition point when $\alpha=1$. Due to $\Delta<1$ in $\mathcal{G}$, we always have $\Pi>1$.  
The worst social delay function with respect to $\beta e$ is then a combination of the function $J_{\rm soc}(\hat{x}_1^{\rm b\dagger}(\beta e))$ and the constant stage $J_{\rm soc}(1)$ starting at $\beta e=\Pi$. We thus have to consider two scenarios. Letting $\beta_{gm}=\frac{1}{\sqrt{e_{\rm L}e_{\rm U}}}$, if $J_{\rm soc}(\hat{x}_1^{\rm b\dagger}(\beta_{gm} e_{\rm L}))=J_{\rm soc}(\hat{x}_1^{\rm b\dagger}(\beta_{gm} e_{\rm U}))$, then the same as case (A), we have $\beta^*=\beta_{gm}$; if $J_{\rm soc}(\hat{x}_1^{\rm b\dagger}(\beta_{gm} e_{\rm L}))>J_{\rm soc}(\hat{x}_1^{\rm b\dagger}(\beta_{gm} e_{\rm U}))$, then the constant stage has been reached at the upper bound of the error, i.e., $\beta_{gm} e_{\rm U}>\Pi$. Therefore, we could care less about the upper bound but more about the lower bound. To minimize the worst case social delay, instead of equalizing the social delay on the upper bound and the lower bound, we equalize the social delay of the lower bound and of the transition point $\Pi$. Therefore, we let 
\begin{align}
    \hat{x}_1^{\rm b\dagger}(\Pi)-\Delta=\Delta-\hat{x}_1^{\rm b\dagger}(\beta^*e_{\rm L}).
\end{align}Thus, we have $\beta^*=\frac{1}{e_{\rm L}\Pi}$.
\end{itemize}
Summarizing the cases, only when $\beta_{gm} e_{\rm U}>\Pi$ is satisfied in case (B), i.e., $0<\Pi<\sqrt{\frac{e_{\rm U}}{e_{\rm L}}}$, we have $\beta^*=\frac{1}{e_{\rm L}\Pi}$; otherwise, we always have $\beta^*=\frac{1}{\sqrt{e_{\rm L}e_{\rm U}}}$.
\end{proof}

According to Theorem~\ref{thm:1}, when measurements are imperfect, we can configure the altruistic vehicles with the optimal altruism level to minimize the worst case social delay under the uncertainty.

\section{conclusion} \label{sec:future}
In this work, we employ altruistic vehicles among the selfish mainline vehicles to improve the social traffic conditions of the on--ramps. We gave the conditions of the altruistic ratio and altruism level for altruistic vehicles to decrease or optimally decrease the social delay. Further, we assumed uncertainty in the altruistic cost measurements and we gave the optimal altruism level to configure altruistic vehicles which minimizes the worst case social delay under the uncertainty.

\section*{Acknowledgments}
This work was supported by the National Science Foundation under Grants CPS 1545116 and ECCS-2013779.


\bibliographystyle{IEEEtran}
\bibliography{onramp.bib}

\end{document}